\long\def\comment#1{}
\newfont{\bbb}{msbm10 scaled 700}
\newfont{\bb}{msbm10 scaled 1100}
\newtheorem{definition}{Definition}
\newtheorem{theorem}{Theorem}
\newtheorem{proof}{Proof}
\begin{document}
										
\title{Graph Theory versus Minimum Rank\\ for Index Coding}
\author{
\IEEEauthorblockN{Karthikeyan Shanmugam \IEEEauthorrefmark{1}, Alexandros G. Dimakis \IEEEauthorrefmark{1} and Michael Langberg \IEEEauthorrefmark{2}}
 
\IEEEauthorblockA{\IEEEauthorrefmark{1}
Department of Electrical and Computer Engineering, University of Texas at Austin, Austin, TX 78712-1684. \\
\texttt{karthiksh@utexas.edu,dimakis@austin.utexas.edu}}

\IEEEauthorblockA{\IEEEauthorrefmark{2}
Department of Electrical Engineering, SUNY at Buffalo, Buffalo, NY 14260. \\
\texttt{mikel@buffalo.edu}}
}

\maketitle
\begin{abstract}
We obtain novel index coding schemes and show that they provably outperform all previously known graph theoretic bounds proposed so far.
Further, we establish a rather strong negative result: all known graph theoretic bounds are within a logarithmic factor from the chromatic number. This is in striking contrast to $\mathrm{minrank}$ since prior work has shown that it can outperform the chromatic number by a polynomial factor in some cases. The conclusion is that all known graph theoretic bounds are not much stronger than the chromatic number. 
\end{abstract}

\section{Introduction}

Index coding is a fundamental network information theory problem with deep connections with combinatorial optimization and graph theory~\cite{birk1998informed,bar2006index,alon2008broadcasting,blasiak2010index,Kim2013ISIT,unal2013}. Interest in index coding is further increasing due to two recent developments: 
The first is that it was recently shown~\cite{el2010index,effros2012equivalence} that any arbitrary 
network coding problem with potentially multiple sources and receivers can be mapped to a properly constructed index coding instance. Therefore, statements about index coding can be translated to constructions or bounds for general networks, showing the surprising expressiveness of the problem. 
Second, interference alignment alongside information theoretic approaches have been recently applied for index coding~\cite{maleki2012index,jafartopology,tahmasbi2013critical,unal2013,Kim2013ISIT} introducing new interesting techniques for code constructions. Briefly, index coding is a noiseless broadcast problem where $m$ messages needs to be sent to $n$ users each requesting one of the $m$ messages. In addition, every user has some \textit{side information packets} which is a subset of the $m$ messages not including the request. Index coding capacity refers to the minimum number of (coded) transmissions required to satisfy all users. When $m=n$ and user requests do not overlap the problem can be  represented in terms of a directed side information graph $G_d$. A directed edge $(i,j)$ means that user $i$ has packet requested by $j$.

Methods for constructing index codes (\textit{i.e.} upper bounds for index coding) can be broadly separated in two categories: graph theoretic methods and algebraic methods relying on rank minimization. The focus of this work is on the former. Graph theoretic methods start from the well-known fact that all the users forming a clique in the side information digraph can be simultaneously satisfied by transmitting the XOR of their packets~\cite{birk1998informed}. This idea shows that the number of cliques required to cover all the vertices of the graph (the \text{clique cover number}) is an achievable upper bound. It is easy to see that the chromatic number of the complement graph is equal to the clique cover number. This is because all the vertices assigned to the same color cannot share an edge and hence must form a clique on the complement graph. It turns out that the idea based on coloring lead to a family of stronger bounds, starting with an LP relaxation called fractional chromatic number~\cite{blasiak2010index} and the stronger local chromatic number~\cite{Localpap1} which can be further fractionalized. Instead of covering with cliques, one can cover the vertices with cycles and obtain cycle cover bounds~\cite{bar2006index}. Another achievable scheme called \textit{partition multicast} was proposed~\cite{tehrani2012bipartite} which generalized both cycle and clique covers. In partition multicast, one first partitions the graph into subgraphs corresponding to sub-problems of the given index coding problem before choosing an appropriate covering for each subgraph. 

The second family of bounds is algebraic and requires minimizing the rank over all matrices that respect the structure of the side information graph over a finite field. It turns out \cite{bar2006index} that (for a given field size), scalar linear index coding is equal to the $\mathrm{minrank}$ quantity introduced by 
Haemers~\cite{haemers1978upper} in 1978 to obtain a bound for the Shannon graph capacity~\cite{shannon1956zero}.
Therefore, $\mathrm{minrank}$ characterizes the best possible scalar linear index code for a given finite field. 
Throughout this paper, we refer to the former family of bounds as graph-theoretic and the latter as algebraic. 

The main question we investigate in this paper is how all these quantities compare. We introduce a new graph theoretic bound and show that it provably outperforms all previous graph bounds. Our bound is obtained by combining all previous graph theoretic ideas discussed above: local coloring, fractionalization and partitioning. 
We then prove a rather strong negative result: \textit{all known graph theoretic bounds are within a constant factor from the fractional chromatic number}. Previous work has established that the fractional chromatic number is within a $\log n$ factor from the chromatic number \cite{feige1998threshold}. Therefore, all known graph bounds can improve, at most, a $\log n$ factor from the chromatic number. This is in striking contrast to $\mathrm{minrank}$ where prior work has shown~\cite{lubetzky2009nonlinear,blasiak2010index} that it can outperform the chromatic number by a polynomial factor.  We emphasize that this performance benefit of $\mathrm{minrank}$ is shown only for special graph constructions~\cite{lubetzky2009nonlinear} and there are other examples 
where the fractional chromatic number can outperform $\mathrm{minrank}$. 

Depending on the structure of the side information graph, index coding can be investigated for undirected
(\textit{i.e.} symmetric side information) or, more generally directed graphs. In even greater generality, if we allow multiple users to request the same packet we can describe the problem with a hypergraph or with a bipartite directed graph~\cite{blasiak2010index,jafartopology,tehrani2012bipartite}. We refer to directed graph problems as unicast index coding (UIC) and more general hypergraphs as groupcast index coding (GIC).
   
We summarize our results and the previously known relationships between graph parameters in Figure \ref{Fig:Resultdiag}. We present the results for both the groupcast index coding (GIC) and the unicast index coding (UIC) scenarios. It should be noted that GIC results are more general and the directed graph parameters are included for readability. The blue box in the figure indicates previously known known parameters and relationships. Formal definitions will be given in Section \ref{Sec:Defn}.

\begin{figure}   
\centering
\includegraphics [width=9cm]{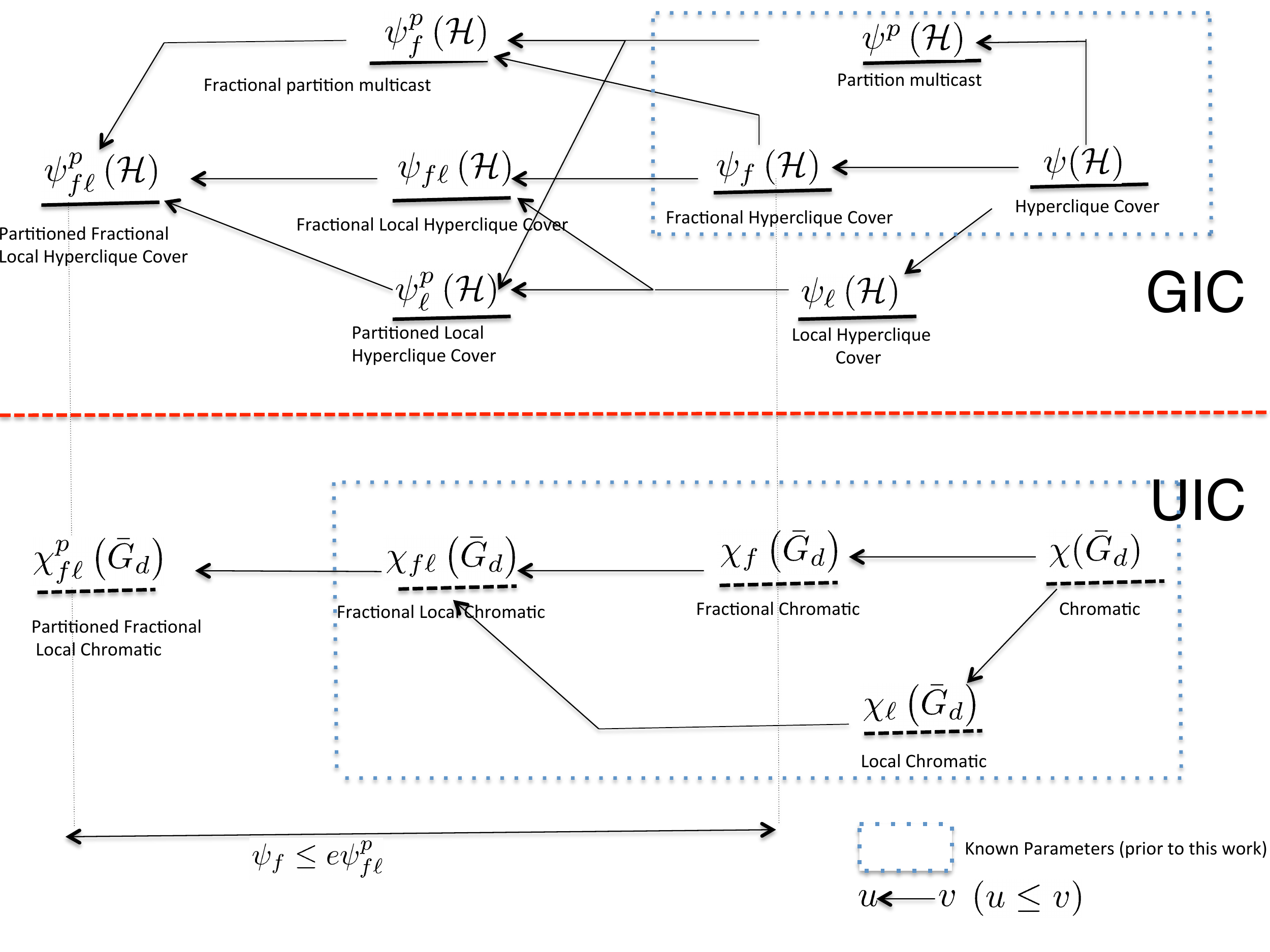}
\caption{\footnotesize Summary of our contributions. The bottom part of the figure describes index coding bounds for directed graphs (Unicast Index coding (UIC)) while the top describes the more general case of hypergraphs (Groupcast index coding (GIC)). Smaller graph quantities are placed to the left and the weakest bound (chromatic number) is placed to the rightmost of the figure. Arrows indicate the relationship they satisfy. An important result is shown at the bottom of the figure, illustrating that the 
best bounds we obtain ($\psi_{f \ell}^{p}$ and $\chi_{f \ell}^{p}$ are within a factor of $e$ from the fractional chromatic number $\chi_f$ and its hypergraph generalization $\psi_f$ respectively.)  }   
\label{Fig:Resultdiag}
\end{figure}

We outline our contributions as follows:
        
    \subsection{Our Contributions:}  
      \begin{enumerate}
         \item We start by extending the work of \cite{Localpap1}, that addressed the concept of local chromatic number in the context of unicast index coding to the more general groupcast setting. We define new parameters called \textit{local hyperclique cover} and its fractional version. These are the group cast analogues to local and fractional local chromatic numbers. We show that these have index coding achievable schemes. Further, we show that these parameters are 
within a factor of $e$ away from the fractional hyperclique cover. This is the natural generalization of 
the fractional chromatic number for the groupcast case.
         \item We define another parameter, called \textit{partitioned local hyperclique cover} and its fractional version for the groupcast setting. 
We show that this scheme is stronger than the ones based on local hyperclique cover and partition multicast and therefore all known graph-theoretic bounds. This parameter combines the ideas behind local coloring and partition multicast to provide a better index coding scheme.
         \item Finally, for our negative result, we show that this new scheme is within a factor $e$ from the fractional hyperclique cover (implying the same for all previous bounds as well).
      \end{enumerate}  
      
In the two subsequent sections, we provide detailed definitions for all the quantities used in this paper including the novel ones. Subsequently we state our results that bound the relationships of new quantities and also some unknown relationships of previously introduced quantities. Due to space constraints we omit most proofs that can be found in the long version of this manuscript \cite{Urllocal}.


\section{Definitions and review of existing parameters}\label{Sec:Defn}
  For ease of notation, let $[n]$ denote the set $\{1,2 \ldots n \}$. $A-B$ is the set difference between sets $A$ and $B$. Let $G_d(V,E_d)$ be a directed graph on $n$ vertices. If $u \in V$. Let $N(u)$ denote the directed out-neighborhood, i.e. $N(u)= \{v \in V: (u,v) \in E_d \}$. Let $\left( N(u) \right)^c = V-N(u)-u$. Let $\bar{G}_d \left( V, \bar{E}_d \right)$ denote the \textit{directed complement} of $G_d$ which is another directed graph where out-neighborhood of vertex $u$ is $\left( N(u) \right)^c$. Let $2^A$ be the power set of $A$.   We define a groupcast index coding problem input instance using a directed bipartite graph as follows. 
  
   \begin{definition}
                      A Groupcast Index Coding problem (GIC) instance is given by the set $\{U,P, {\cal H} (U,P,L)\}$. $U=\{1,2,3 \ldots n\}$ is the set of users with $\lvert U \rvert=n$, $P=\{x_1,x_2 \ldots x_m\}$ is the set of packets with $\lvert P \rvert =m,~n \geq m$. ${\cal H}$ is a directed bipartite graph between the sets $U$ and $P$ with $L$ as the set of directed edges. Each packet $x_i \in \Sigma$ where $\Sigma$ is some alphabet. Every user $u$ requests a single packet $R(u) \in P$ and it has $S(u) \subset P-R(u)$ as \textit{side information}. If the request of user $u$ is $R(u)=p$, then the directed edge $(u,p) \in L$. If $p \in S(u)$, the directed edge $(p,u) \in L$.  $\hfill \lozenge$
   \end{definition}                   
                      
              Another representation of GIC is in terms of a directed hypergraph \cite{alon2008broadcasting}. In this representation, the problem is represented as a directed hypergraph ${\cal G} \left( P, U\right)$ such that $P$ is the set of vertices and every user $u$ corresponds to a directed hyperedge $\left(R(u), S(u) \right)$. In this work, we use adopt the equivalent directed bipartite graph representation of \cite{tehrani2012bipartite}.
                     
                     We assume w.l.o.g. that for all $u$, $\lvert R(u) \rvert=1$. Let $\left( S(u) \right)^c = P - S(u)-u$. Let $W(p)$ denote the set of all users who want packet $p$. $R(A)=\bigcup \limits_{u \in A}R(u)$ and $W(P)=\bigcup \limits_{p \in P}W(p)$. Note that a packet can be requested by multiple users.
                     
                      The GIC problem involves a common broadcasting agent who needs to satisfy all user requests with a minimum number of bits over a public broadcast noiseless channel. The agent is cognitive of all the side information present at every user. Transmitted bits are decoded at each user using its side information to recover the desired packets.  
In what follows, we define the minimum broadcast rate for the GIC problem. We define a valid index code for the GIC problem as follows:
 
         \begin{definition}
			(\textit{Valid index code})  Here, for notational reasons, assume $R(u)=x_i \in \Sigma$  is the packet desired by user $i$. A \textit{ valid index code} over the alphabet $\Sigma$ is a set $(\phi,\{\gamma_i\})$ consisting of:
  \begin{enumerate}
        \item An encoding function $\phi:\Sigma^m \rightarrow \{ 0,1\}^p$ which maps the $m$ packets to a transmitted message of length $p$ bits for some integral $p$. 
        \item   $n$ decoding functions $\gamma_i$ such that for every user $i$, $\gamma_u(\phi \left( x_1,x_2 \ldots x_m \right),S(i)) = x_i$ for all $\left[ x_1 ~x_2 \ldots x_n\right] \in \Sigma^m$. In other words, every user would be able to decode its desired message from the transmitted message and the side information.  $\hfill \lozenge$ 
  \end{enumerate}
           
\end{definition}

 The \textit{broadcast rate} $\beta_{\Sigma}({\cal H},\phi, \{ \gamma_i \})$ of the $(\phi,\{ \gamma_i \})$ index code for the GIC on ${\cal H}$ is the number of transmitted bits per received message bit at every user, i.e. $\beta_{\Sigma}(G_d,\phi,\{ \gamma_i\})= \frac{p}{\log_2 \lvert \Sigma \rvert}$.
        \begin{definition}  
             (\textit{Minimum broadcast rate})  The minimum broadcast rate $\beta({\cal H})$ is the minimum possible broadcast rate of all valid index codes over all alphabets $\Sigma$, i.e. $\beta({\cal H})= \inf \limits_{\Sigma} \inf \limits_{\phi,\{\gamma_i \}} \beta_{\Sigma} ({\cal H},\phi,\{ \gamma_i \})$. 
         $\hfill \lozenge$ 
 \end{definition}
   
       Now, we digress slightly by discussing an important special case of the GIC problem. A unicast index coding problem (UIC) is a special case of GIC where user requests do not overlap. Hence, without loss of generality, we take $m=n$ and take $P=U$ (packets and users are indistinguishable and user $i$ requests packet $x_i$). Therefore, one can represent a UIC problem using a directed side information graph $G_d$ with vertex set $U$ where the out-neighborhood of user $u$ is $N(u)=S(u)$. 
   \begin{definition}
              (\textit{Interference graph}) The interference graph, denoted by $\bar{G}_d(V,\bar{E}_d)$ of an UIC problem is a \textit{ directed complement} $\bar{G}_d$ of the side information graph $G_d$.      $\hfill \lozenge$ 
\end{definition}   
         
%
	 
	   We now present a number of previously studied upper bounds on $\beta({\cal H})$ for GIC. The first is a bound from \cite{blasiak2010index}, referred to as the \textit{fractional hyperclique cover} and denoted here by $\psi_f \left( {\cal H} \right)$. Our definition below slightly differs from that in \cite{blasiak2010index} but nevertheless is equivalent. 
	   
	  \begin{definition}	
             (\textit{Weak Hyperclique}) A weak hyper clique $C \subseteq U$ is such that for any pair $u,v \in C$, we have $ ( u \in S(v) ~\mathrm{AND}~ v \in S(u) )~ \mathrm{OR}~ R(u) =R(v)$.          $\hfill \lozenge$
        \end{definition}
        
      Observe that in the GIC problem, one can satisfy all the users in $C$ by XORing their requests $R(C)$. This implies that a ``cover'' of the hypergraph by weak hypercliques implies a corresponding valid index code. In the rest of the paper, we use the term ``hyperclique'' instead of ``weak hyperclique''.
      
     \begin{definition}
                 The \textit{hyperclique cover} of ${\cal H}$, denoted by $\psi({\cal H})$, is given by the following Integer Program:
                       \begin{align}\label{prog:weakhyp}
                                     & \min \sum \limits_{C \in {\cal C}} y_C \nonumber\\
                                      \mathrm{s.t.} & \sum \limits_{C: u \in C} y_C = 1, ~\forall u \in U \nonumber\\
                                        \hfill & y_C \in \{0,1\}, ~\forall C \in {\cal C}
                       \end {align}
    where ${\cal C}$ is the set of all hypercliques in ${\cal H}$.
         $\hfill \lozenge$
 \end{definition}
          The LP relaxation of $(\ref{prog:weakhyp})$ is the fractional hyperclique cover $\psi_f \left( {\cal H} \right)$.   Now, provide some intuition behind program (\ref{prog:weakhyp}). A feasible solution to  (\ref{prog:weakhyp})  is a set of chosen hypercliques such that every user is covered exactly by one hyperclique. The least number of hypercliques required to cover every user is given by $\psi$. This implies that $\beta \leq \psi({\cal H})$ by our discussion above. In the UIC problem, a hyperclique is equivalent to a clique on $G_d$ (a clique in a directed graph is a complete subgraph where there are edges in both directions between any two vertices). Therefore, the fractional chromatic number, defined on the directed complement $\bar{G}_d$, is the equivalent of $\psi_f$. It is denoted by $\chi_f \left( \bar{G}_d \right)$.

  We now turn to discuss an additional scheme for GIC, partition multicast,  introduced in \cite{tehrani2012bipartite}. The scheme is a generalization of both cycle cover and  hyperclique cover. Formal definition is given below:
        
     \begin{definition}
        The \textit{partition multicast number} of ${\cal H}$, denoted $\psi^p \left( {\cal H} \right)$, is given by the following integer program:
                       \begin{align}
                                     \hfill & \min~ \sum \limits_{M} a_M d_M \nonumber \\
                                      \mathrm{s.t.}~ & \sum \limits_{M: v \in M} a_M = 1, ~\forall u \in U \nonumber\\
                                        \hfill & a_M \in \{0,1\}, ~\forall M \in 2^{U}-\{\emptyset \}                     
                                        \label{pr:partmul}                
                       \end {align}   
              where ${\cal C}$ is the set of hypercliques in ${\cal H}$ and $d_M= \lvert R(M) \rvert- \min \limits_{u \in M} \lvert R(M) \bigcap S(u) \rvert$. $\hfill \lozenge$  
     \end{definition}  
  
We provide some intuition behind (\ref{pr:partmul}). A feasible solution chooses a family of subsets of users (based on the value of $a_M$). We call each subset a multicast group. Every user is covered by exactly one such group. The bipartite subgraph, induced by a multicast group $M$ and packets demanded by $M$ is denoted ${\cal H} \left( M,R(M) \right)$. Every user has at least $\min \limits_{u \in M} \lvert R(M) \bigcap S(u) \rvert$ packets from $R(M)$. It was shown in \cite{tehrani2012bipartite} that $d_M$ coded transmissions using an $(\lvert R(M) \rvert, d_M  )$ MDS code allows users in group $M$ to recover their packet. The program (\ref{pr:partmul}) partitions the user set into an optimum set of multicast groups depending on the cost ($d_M$) of transmission for each group.

 \section{Definitions for New parameters}
   In this section, we provide definitions of new parameters that will be shown to have achievable index coding schemes for the GIC problem. We begin with the definition for a fractional version of the partition multicast scheme.
   
    \begin{definition}
          The \textit{fractional partition multicast number} of ${\cal H}$ , denoted $\psi_{f}^p \left( {\cal H} \right)$, is given by the LP relaxation of $\psi^p$. 
          $\hfill \lozenge$
   \end{definition}       
          As far as we know, the fractional version of $\psi^p$ has not been studied before. It is possible to show that $\beta({\cal H}) \leq \psi_{f}^p \leq \psi_f$ (simple extension to arguments in \cite{tehrani2012bipartite}). 
   
    	In our prior work \cite{Localpap1}, for the UIC problem on a side information digraph $G_d$,  we have shown that there are index coding achievable schemes based on local and fractional local chromatic numbers defined on the interference graph $\bar{G}_d$, denoted by $\chi_{\ell} \left( \bar{G}_d \right)$ and $\chi_{f \ell} \left( \bar{G}_d \right)$ respectively. Now, we define the GIC analogues of $\chi_{\ell} \left( \bar{G}_d \right)$ and its fractional version $\chi_{f \ell} \left( \bar{G}_d \right)$. As far as we are aware, we have not encountered these generalizations for the GIC problem on directed bipartite graphs. 
           
    \begin{definition}
        The \textit{local hyperclique cover} of ${\cal H}$, denoted $\psi_\ell \left( {\cal H} \right)$, is given by the following integer program:
                       \begin{align}\label{prog:lochyp}
                                     \hfill & \min~ t \nonumber \\
                                      \mathrm{s.t.}& \sum \limits_{ C: W \left( R(u) \bigcup (S(u))^c \right) \bigcap C \neq \emptyset} y_C    \leq t ,~\forall u \in U \nonumber \\
                                      \hfill & \sum \limits_{C: u \in C} y_C = 1, ~\forall u \in U \nonumber\\
                                        \hfill & y_C \in \{0,1\}~\forall C \in {\cal C},~ t \in \mathbb{Z}^{+}
                       \end {align}   
              where ${\cal C}$ is the set of hypercliques in ${\cal H}$.          $\hfill \lozenge$
    \end{definition} 
       
       The LP relaxation of (\ref{prog:lochyp}) is defined to be the \textit{fractional local hyperclique cover}, denoted $\psi_{f \ell} \left( {\cal H} \right)$. Note that, the UIC analogues of $\psi_\ell$ and $\psi_{f \ell}$ are $\chi_\ell \left(\bar{G}_d \right)$ and $\chi_{f \ell} (\bar{G}_d)$ \cite{Localpap1} respectively.  Now, we provide a brief description about the feasible solution to (\ref{prog:lochyp}). For a user $u$, let us call the set of users that request packets not in  $S_u$ to be the \textit{interference neighborhood}. The interference neighborhood consists of: 1) users requesting the same packet as the user ($R(u)$). 2) users requesting packet neither in $S_u$ nor $R(u)$. For any user $u$, given the feasible hyperclique cover, we count the number of hypercliques, belonging to the cover, in user $u$'s interference neighborhood.  Let us call this \textit{local hyperclique count} of user $u$. $t$ denotes the maximum local hyperclique counts over all users.  Then finally minimizing $t$ over all possible hyperclique covers, gives $\psi_\ell$. In this work we will show that $\psi_\ell$ is an upper bound to $\beta$.

  We define a new achievable scheme for the GIC problem by combining ideas from local hyperclique cover and partition multicast. This new scheme is called \textit{partitioned local hyperclique cover} denoted by $\psi_{\ell}^p$. Now, we briefly discuss the motivation behind defining $\psi_{\ell}^p$. 
  
  For simplicity, let us consider the UIC problem on directed side information graphs. Recall that $\chi_f \left( \bar{G}_d\right)$ is the optimal way of fractionally covering a digraph $G_d$ with cliques. Since, a subset of a clique is a clique, partitioning a graph into different groups and then adding up the clique covers of each group is not going to be better than covering the whole graph with cliques without partitioning. 
       \begin{figure}
       \centering
           \includegraphics [width=9cm]{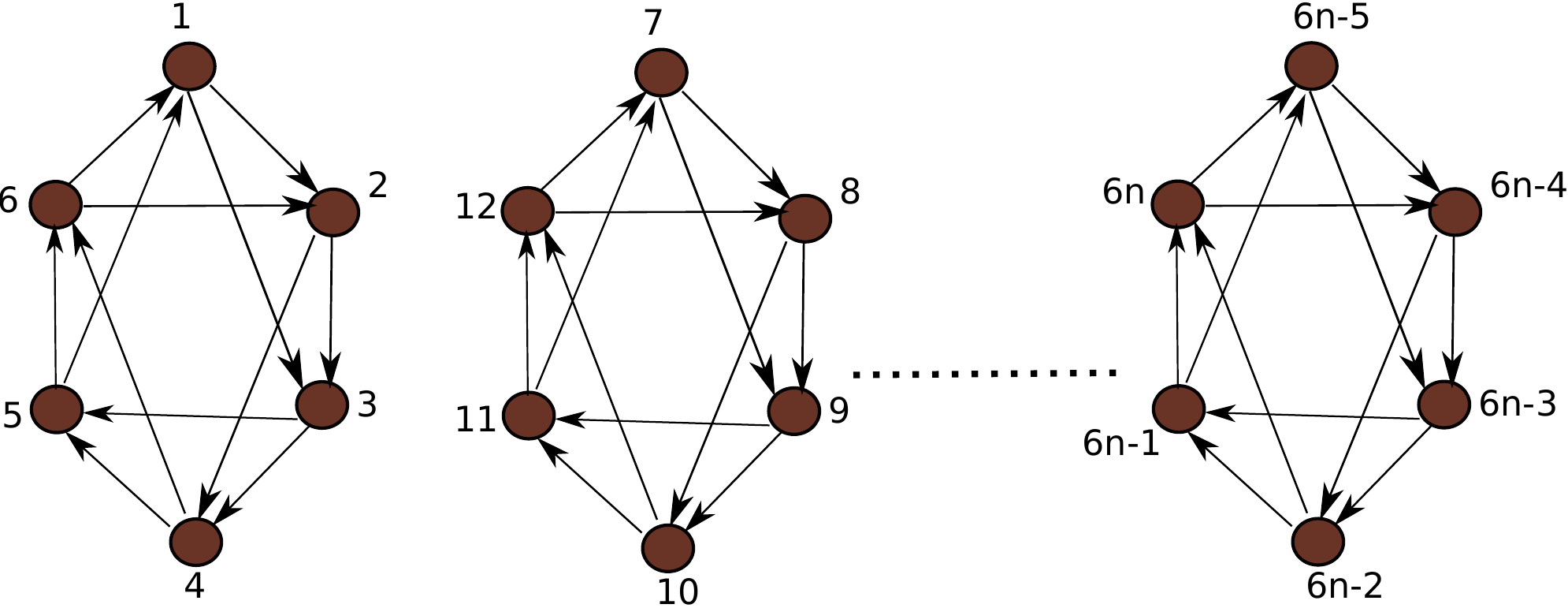}
        \caption{An example UIC problem with a side information graph $G_d$ for which $\chi_f \left(\bar{G}_d \right) = 6n$. The partition multicast number and its fractional versions are both $4n$. Partitioning into component $6$-vertex graphs and adding up the local chromatic numbers of their complements gives $4n$, i.e. $\chi_{f \ell}^p =4n$. }   
        \label{Fig:uniongraph}
       \end{figure}
       
       However, even for directed graphs, partitioning may help when it comes to $\chi_{\ell} \left(\bar{G}_d \right)$. An example illustrating this is given in Fig. \ref{Fig:uniongraph}. The directed side information graph $G_d$, given in Fig. \ref{Fig:uniongraph}, is a union of $n$ different $6$-vertex graphs. In each $6$-vertex graph, every vertex has an out edge to the next two vertices in the ordering. Observe that, there is no clique of size greater than $1$. Hence, the optimal clique cover is obtained by assigning every vertex to a different clique. Therefore, $\beta \leq \chi_f \left( \bar{G}_d \right) =6n$ for the digraph in Fig. \ref{Fig:uniongraph}. In addition, the local chromatic number involves counting the number of cliques in the complement of the neighborhood of any vertex. This gives $\beta \leq \chi_{\ell} \left( \bar{G}_d \right)= \chi_{f \ell} \left( \bar{G}_d \right)=6n-2$. Now,  considering each 6 vertex graph individually,  computing its local chromatic number, and summing up results in the bound $\beta \leq 4n$. Therefore, partitioning provides a significant improvement. This operation of partitioning ${\cal H}$, computing local  hyperclique covers for each subgraph, and adding them up is captured by the definition of $\psi_{\ell}^p$. Incidentally, the partition multicast number for this case is also $4n$. Now, we directly define the GIC variant the combines the idea from partition multicast and local  hyperclique cover.    
                        
   \begin{definition}
             The \textit{partitioned local hyperclique cover number} of ${\cal H}$, denoted $\psi_{\ell}^{p} \left( {\cal H} \right)$, is given by the following integer program:
                    \begin{align}
                            \hfill & \min~ \sum \limits_{M} a_M t_M \nonumber \\
                              \mathrm{s.t.}~ & \sum \limits_{ C: W\left( R(u) \bigcup (S(u))^c \right) \bigcap C \bigcap M \neq \emptyset} y_C  \leq t_M ,~\forall u \in M,~\forall M \in 2^U \nonumber\\ 
                               \hfill & \sum \limits_{M:v \in M} a_M =1, \forall u \in U \nonumber\\
                              \hfill & \sum \limits_{C:v \in C} y_C =1, \forall u \in U \nonumber \\
                              \hfill & a_M,y_C \in \{0,1\} ~\forall M \in 2^{U}-\{\emptyset \}, ~C \in {\cal C}, ~t_M \in \mathbb{Z}^{+} 
                              \label{prog:plochyp}
                     \end{align}  
               where ${\cal C}$  is the set of hypercliques in ${\cal H}$ and $M$ is a multicast group.   $\hfill \lozenge$
    \end{definition}
    
     The fractional version of $\psi_{\ell}^p \left( {\cal H} \right)$, denoted by $\psi_{f \ell}^p \left( {\cal H}\right)$ is the LP relaxation of (\ref{prog:plochyp}). Let us denote the UIC analogue of $\psi_{f \ell}^p$ by $\chi_{f \ell}^p \left(\bar{G}_d \right)$ and call it partitioned fractional local chromatic number. In a feasible solution to (\ref{prog:plochyp}), we first partition the set of users into a family of multicast groups. Separately, we cover all users using a hyperclique cover. Over all users in every group $M$, we get the maximum local hyperclique count $t_M$, restricting the interference neighborhood of every user to that group. Optimizing the sum of all such counts from different multicast groups over all possible hyperclique covers and multicast group allocations gives $\psi_\ell^p$. 
  
  For a preview on the relationships between known parameters, new parameters and new relationships between the parameters we refer the reader to Fig. \ref{Fig:Resultdiag}. In the next section, we provide achievable schemes for all parameters defined in this section. 
 
 \section{Achievable Schemes}
       We first show the existence of achievable schemes for $\psi_{\ell}$ and $\psi_{f \ell}$ . Let $\left( S(u) \right)^c = P-S(u)-R(u)$.
          \begin{theorem}\label{Thm:achievability}
                 There are achievable linear index codes corresponding to $\psi_{\ell} \left( {\cal H}\right)$ and $\psi_{f \ell} \left( {\cal H}\right)$ implying $\beta \left( {\cal H} \right) \leq \psi_{f \ell} \left( {\cal H}\right) \leq \psi_{\ell} \left( {\cal H} \right)$.
          \end{theorem}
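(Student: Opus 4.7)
The plan is to extend the MDS-code based scheme for local chromatic number from the unicast setting of \cite{Localpap1} to hyperclique covers in the groupcast setting. The key observation is that any hyperclique $C$ can be summarized by the single symbol $y_C = \bigoplus_{p \in R(C)} p$, from which every user $u \in C$ can recover $R(u)$ using side information (because the hyperclique property forces every other distinct packet in $R(C)$ to lie in $S(u)$). Transmitting all $\psi({\cal H})$ such symbols naively gives rate $\psi$, but the local count $\psi_\ell$ lets us compress this using an MDS code, since each user only needs to disambiguate a bounded number of these symbols.

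For $\psi_\ell$, I would take an optimal integer solution to (\ref{prog:lochyp}) with value $t = \psi_\ell$, yielding hypercliques $C_1,\ldots,C_k$ that cover every user exactly once and satisfy $|N(u)| \leq t$ for every $u$, where $N(u) = \{i : C_i \cap W(R(u) \cup (S(u))^c) \neq \emptyset\}$. Pick a prime power $q \geq k$ and let $v_1,\ldots,v_k \in \FF_q^t$ be the columns of the generator matrix of a $(k,t)$ Reed--Solomon code, so that any $t$ of them are linearly independent. Broadcast the length-$t$ vector $T = \sum_{i=1}^{k} v_i\, y_i$, achieving rate $t$.

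Decoding at user $u$ proceeds as follows. If $C_i \cap W(R(u) \cup (S(u))^c) = \emptyset$, then every $v \in C_i$ requests a packet in $S(u)$, so $u$ knows the entire set $R(C_i)$ and can compute $y_i$ locally. Subtracting these known terms from $T$ leaves $T' = \sum_{i \in N(u)} v_i\, y_i$, a length-$t$ vector written as a linear combination of at most $t$ of the $v_i$'s; by the MDS property this system uniquely determines each $y_i$ for $i \in N(u)$, including $y_{C(u)}$ for the unique hyperclique $C(u) \ni u$. User $u$ then XORs out the packets in $R(C(u)) \setminus \{R(u)\}$, all of which lie in $S(u)$ by the hyperclique property, and recovers $R(u)$. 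This proves $\beta({\cal H}) \leq \psi_\ell({\cal H})$, and $\psi_{f\ell} \leq \psi_\ell$ is immediate since the former is the LP relaxation of the latter.

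The inequality $\beta({\cal H}) \leq \psi_{f\ell}({\cal H})$ follows by the standard rational-scaling argument: take an optimal rational LP solution with common denominator $K$, split each packet into $K$ subpackets (equivalently, work over alphabet $\Sigma^K$), and apply the integer scheme to the $K$-scaled cover, producing $K\psi_{f\ell}$ transmitted symbols per $K$ packet-symbols for a rate of $\psi_{f\ell}$. The main obstacle is verifying that the local-count bound lifts cleanly to the scaled instance so that the MDS argument still applies; this is routine because each user's interference neighborhood is preserved under subpacketization and the linear-algebra decoding is unchanged.
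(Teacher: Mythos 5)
Your scalar argument is correct and is essentially the paper's own scheme in different notation: the paper's transmission $\sum_i \mathbf{g}_{C(u_i)} x_{R(u_i)}$, taken over equivalence classes of users sharing a hyperclique and a request, is exactly your $\sum_C \mathbf{g}_C\, y_C$ with $y_C$ the sum of the distinct packets requested inside $C$, and your decoding (locally compute the symbols of hypercliques lying entirely inside $W(S(u))$, solve for the at most $t$ remaining symbols using the MDS property, then strip the side-information packets out of $y_{C(u)}$) is an equivalent rearrangement of the paper's ``at most $t-1$ interfering columns'' count. One cosmetic fix: define $y_C$ as a sum over the same field used for the MDS code rather than a bitwise XOR, so the scalar multiplications by the columns are well defined. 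With that, $\beta \leq \psi_\ell$ and $\psi_{f\ell} \leq \psi_\ell$ are fine.

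The fractional half is where the paper's proof does almost all of its work, and your one-sentence dismissal (``routine because each user's interference neighborhood is preserved under subpacketization and the decoding is unchanged'') skips exactly the step that needs an argument. After clearing denominators, a user is covered by $K$ hyperclique copies counted with multiplicity (since $\hat{y}_C = K y_C$ can exceed $1$), so ``apply the integer scheme to the $K$-scaled cover'' is not yet a defined scheme: there is no longer one summary symbol per hyperclique, because a copy of a hyperclique must carry a specific subpacket of each requested packet, and different users requesting the same packet $p$ may be covered by different collections of copies. To make this precise you must (i) give every copy of every hyperclique its own column of an $\bigl(s,\hat{t}\bigr)$ MDS code with $s=\sum_C \hat{y}_C$, (ii) fix for each user an assignment of its $K$ covering copies to the subpacket indices $1,\dots,K$ (handling coherently the users whose requests coincide), and (iii) redo the interference count at the level of (user, subpacket) pairs, observing in particular that the copies carrying the other subpackets $x_{(R(u))_j}$, $j \neq i$, of the user's own request cannot be cancelled from side information and are tolerable only because they belong to hypercliques meeting $W(R(u))$ and hence are already counted inside the $\hat{t}$ bound of the first constraint of (\ref{prog:lochyp}). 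This is precisely the index-pair $(u,i)$ and equivalence-class bookkeeping that occupies the second half of the paper's proof, and it is the ``additional technicality due to overlapping requests'' that the theorem's proof exists to address; as written, your proposal asserts this part rather than establishing it.
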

          \begin{proof}
                   The proof is analogous to the ones for the UIC problem found in \cite{Localpap1}. The need for outlining a proof is because of additional technicalities due to the fact that user requests overlap. First, we consider the case of $\psi_{\ell} \left( {\cal H} \right)$. Consider the optimal integral solution $(t, \{ y_C\})$ to program (\ref{prog:lochyp}). Every vertex is in exactly one hyperclique which is chosen. Consider the set ${\cal C}^{\mathrm{opt}}$ of hypercliques $C$ for which $y_C=1$ (hyperclique chosen) in the optimal solution. Let $\lvert {\cal C}^{\mathrm{opt}} \rvert=s$. Consider a $t \times s$ generator $\mathbf{G}$ of an $(s,t)$ MDS code over a field $\Sigma$ of size greater than $s$. Let the $i$th column be $\mathbf{g}_i$. Assign each column to a hyperclique in ${\cal C}^{\mathrm{opt}}$. Let $C(u) \in {\cal C}^{\mathrm{opt}}$ be the unique hyperclique to which user $u$ belongs. Let packet $p$ be denoted by $x_p \in \Sigma,~ \forall p \in P$. Define an equivalence relation $\sim$ between two users $u,v \in U$ such that $u \sim v~ \mathrm{iff}~  C(u)=C(v)~\mathrm{and}~R(u)=R(v) $. Let $A_u$ be the equivalence class of $u$ under $\sim$. Note that $u \in A_u$. Let us assume that $U$ partitions into $q$ equivalence classes, i.e. $U= \bigcup \limits_{i=1}^q A_{u_q}$ . Then the transmission scheme is given by:
                   \begin{equation}\label{eq:localtrans}
                    \mathbf{y}=\sum \limits_{ i \in [q] } \mathbf{g}_{C\left( u_i \right)} x_{R \left( u_i \right)}.
                    \end{equation}
  In other words, if there are two users who request the same packet and belong to the same clique, they belong to the same equivalence class and their terms can be merged into one. The broadcast rate is given by $t$.
                   
                   We need to show that a user $u$ can decode $\mathbf{x}_{R(u)}$ from this. All the terms with $\mathbf{x}_p,~p \in S(u)$ can be cancelled due to the side information of $u$. This means that, in $\mathbf{y}$, all summands corresponding to users $W(S(u))$ do not affect decoding. Note that, $W \left( R(u) \bigcup \left( S(u) \right)^c \right) = W \left( R(u) \right) \bigcup W \left(   \left( S(u) \right)^c \right)$. 
                   
                   The first constraint in program (\ref{prog:lochyp}) ensures that the number of distinct hypercliques from ${\cal C}^{\mathrm{opt}}$ in $ W \left( R(u) \right) \bigcup W \left( \left( S(u) \right) ^c \right)$ is at most $t$. Let $A_u$ be the equivalence class to which $u$ belongs. Observe that $A_u \bigcap  W \left( \left( S(u) \right) ^c \right) =\emptyset $ because $A_u$ only has users requesting $R(u)$ and $R(u) \notin \left( S(u) \right)^c$ by definition. The number of hyperlcliques from ${\cal C}^{\mathrm{opt}}$ in $A_u$ is $1$, by definition of $A_u$. Then, $ \left( W \left( R(u) \right) - A_u \right) \bigcup W \left( \left( S(u) \right) ^c \right)$ has at most $t-1$ hyper cliques. 
                   
                   The terms corresponding to users in $ \left( W \left( R(u) \right) - A_u \right) \bigcup W \left( \left( S(u) \right) ^c \right)$ constitute the interference terms. Therefore, at most $t-1$ distinct columns from $\mathbf{G}$ interferes with $\mathbf{g}_{C(u)} $. Since any $t$ columns in $\mathbf{G}$ are linear independent, the interference can be cancelled. The difference from this and the UIC case is that user requests overlap leading to a more technical analysis.           
                     
                    Now, we move to specifying an achievable scheme for the LP relaxation  of (\ref{prog:lochyp}). Let $t,\{y_C\}_{C \in {\cal C}}$ constitute the optimum solution. Since, the constraints on the variables involve only integers, the optimal solution involves only rationals. Let $r$ denote the least common multiple of denominators of $y_C$ and $t$. Define the new variable $\hat{t}=rt$ and $\hat{y}_C= r y_{C}$. Now the new variables carry integral weights. Every hyperclique ($y_C$) is assigned an integer weight in the set $\{0,1,2, \ldots r\}$.  By the covering constraints on every vertex, every vertex is covered by exactly $r$ hypercliques. Assume that every packet $p$ is a super packet containing $r$ subpackets $\mathbf{x}_{p_1} \ldots \mathbf{x}_{p_r} \in \Sigma$. Let $\sum \limits_{C} \hat{y}_C=s$.
                     
                      If a hyperclique has weight $1 \leq q \leq r $, then consider $q$ different copies of the same hyperclique. Denote the resulting multiset of cliques by ${\cal C}^{\mathrm{opt}}$. Every hyperclique in ${\cal C}^{\mathrm{opt}}$ has weight $1$ with possible repetitions among cliques. Every user $u$ is covered by at most $r$ hypercliques from ${\cal C}^{\mathrm{opt}}$. Now assign these $r$ hypercliques to $r$ different indices of the form $(u,i), ~1 \leq i \leq r$. Hence, every index pair $(u,i)$ is assigned a subpacket $x_{\left( R(u) \right)_i}$ and a hyperclique $C(u,i)$.
                      
                       If two user requests overlap, i.e. $R(u)=R(v)$, then $\left(R(u)\right)_i=\left( R(v) \right)_i$ (all respective subpackets are identical). As before in the scalar case, define $\sim$ to be an equivalence relation such that $(u,i) \sim (v,j)  ~\mathrm{iff}~ R(u)=R(v)~ \mathrm{and}~C(u,i)=C(v,i)~\mathrm{and}~ i=j,~\forall v,u \in U~ \forall i,j \in [r]$. Let ${\cal A}_{u,i}$ be an equivalence class, of all the index pairs which denote the same subpacket as the index pair $(u,i)$ and are assigned the same hyperclique, under the relation $\sim$. Note that, $(u,i) \in {\cal A}_{u,i}$. Let the set $U \times [r]$ of all index pairs be partitioned into $b$ equivalence classes, i.e. $U \times [r] = \bigcup \limits_{i=1}^b {\cal A}_{u_{i},k_i}$ where $1 \leq k_i \leq r$ and $u_{i} \in U$. Now, consider an $\left( s,\hat{t} \right)$ MDS code over $\Sigma$ with generator $\mathbf{G}$ with columns $i$ denoted by $\mathbf{g}_i$. Assign every column to a distinct hyperclique in ${\cal C}^{\mathrm{opt}}$ such that a column is denoted by $\mathbf{g}_{C}$ after the hyperclique $C$ assigned to it. The transmission scheme is given by:
                       \begin{equation}\label{eqn:trans}
                        \mathbf{y} = \sum \limits_{i=1}^b x_{\left(R(u_{i})\right)_{k_i}} \mathbf{g}_{C(u_i,k_i)}   
                       \end{equation}
    If any two index pairs $(u,i)$ and $(v,i)$ are such that $R(u)=R(v)$ and if they are assigned the same hyperclique there is only one term corresponding to both of them.              
              
              We need to show that every user decodes all the subpackets $\mathbf{x}_{\left(R(u)\right)_i}, ~\forall i \in [r]$. We define a modified $W$ function tht produces index pairs instead of just users. Define $\tilde{W}(A) = \{ (u,i): R(u) \in A, i \in [r] \}$ for any subset $A \subseteq P$. Let us consider the decoding of subpacket $\left( R(u) \right)_{i}$.
              
                   Now, we use arguments very similar to the scalar case.  Rephrasing the first constraint in program (\ref{prog:lochyp}), the number of distinct hypercliques in $ \tilde{W} \left( R(u) \right) \bigcup \tilde{W} \left( \left( S(u) \right) ^c \right)$ is at most $\hat{t}$ different hypercliques from ${\cal C}^{\mathrm{opt}}$. Let ${\cal A}_{u,i}$ be the equivalence class to which $(u,i)$ belongs.                   
                   
                   Observe that ${\cal A}_{u,i} \bigcap  \tilde{W} \left( \left( S(u) \right) ^c \right) =\emptyset $. Hence, $ \left( \tilde{W} \left( R(u) \right) - {\cal A}_{u,i} \right) \bigcup \tilde{W} \left( \left( S(u) \right) ^c \right)$ has at most $\hat{t}-1$ different hyper cliques from ${\cal C}^{\mathrm{opt}}$. The summands in (\ref{eqn:trans}), corresponding to users in $ \left( \tilde{W} \left( R(u) \right) - {\cal A}_{u,i} \right) \bigcup \tilde{W} \left( \left( S(u) \right) ^c \right)$, constitute the interference terms. Therefore, at most $t-1$ distinct columns from $\mathbf{G}$ interferes with $\mathbf{g}_{C(u,i)} $. Since any $\hat{t}$ columns in $\mathbf{G}$ are linear independent, the interference can be cancelled and therefore user $u$ can decode $\mathbf{x}_{\left( R(u)\right)_i}$. Note that, the terms involving $\mathbf{x}_{\left( R(u) \right)_j}$ constitute interference when $j \neq i$.
 Since, every user receives $r$ subpackets and the total number of transmissions is $\hat{t}$, rate is given by $t$. This concludes the proof.                    
                   
          \end{proof}
    
  Now, we show that achievable schemes exist for all parameters that are based on partition multicast.  
   \begin{theorem}
              For a GIC on ${\cal H}$, there exist achievable index coding schemes whose broadcast rates equal $\psi_f^p, \psi^p_{\ell}$ and $\psi^p_{f \ell}$.
    \end{theorem}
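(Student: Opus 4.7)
The plan is to reduce the achievability of each of the three quantities $\psi_f^p$, $\psi_\ell^p$, $\psi_{f\ell}^p$ to a combination of the partition multicast scheme of \cite{tehrani2012bipartite} and the scheme developed in Theorem \ref{Thm:achievability} for $\psi_\ell$, together with the by-now-standard LP relaxation trick (clear denominators, split each packet into sub-packets, apply the integer scheme layer by layer).

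For $\psi_f^p$, the plan is to mimic the second half of the proof of Theorem \ref{Thm:achievability}. Given an optimal fractional solution $\{a_M\}$ to the LP relaxation of (\ref{pr:partmul}), let $r$ be the least common multiple of the denominators of the $\{a_M\}$. Split each packet $x_p$ into $r$ sub-packets. Each multicast group $M$ is then invoked $r a_M \in \mathbb{Z}^+$ times on disjoint layers of sub-packets, and on each invocation one uses the $(|R(M)|, d_M)$ MDS code of the integer partition multicast scheme. Summing over layers gives total rate $\sum_M a_M d_M = \psi_f^p$ per sub-packet level.

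For $\psi_\ell^p$, start from an optimal integer solution to (\ref{prog:plochyp}): a partition of $U$ into multicast groups $\{M\}$ and a global hyperclique cover $\{y_C\}$. For each group $M$, restrict the hyperclique cover to users in $M$; this is still a valid cover of the induced sub-instance $\mathcal{H}(M, R(M))$ since any subset of a hyperclique is itself a hyperclique. The first constraint of (\ref{prog:plochyp}) says that for every $u \in M$, the number of distinct chosen hypercliques that intersect $W(R(u)\cup (S(u))^c)\cap M$ is at most $t_M$, which is exactly the local hyperclique constraint (\ref{prog:lochyp}) for the sub-instance. Invoking Theorem \ref{Thm:achievability} on $\mathcal{H}(M,R(M))$ therefore produces a linear index code of rate $t_M$ via an $(s_M, t_M)$ MDS code. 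Since the groups $\{M\}$ partition $U$, concatenating these codes yields a valid index code for $\mathcal{H}$ of rate $\sum_M a_M t_M = \psi_\ell^p$. For $\psi_{f\ell}^p$, combine the two arguments above: clear denominators in an optimal LP solution with $r$ the LCM of all fractional variables, split each packet into $r$ sub-packets, treat multicast groups and hypercliques as carrying integral multiplicity, and apply the $\psi_\ell^p$ scheme above on each layer. Normalizing by $r$ gives broadcast rate $\psi_{f\ell}^p$.

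The main technical hurdle will be the same bookkeeping issue that complicated Theorem \ref{Thm:achievability}: users with overlapping requests. A user $u\in M$ may share a request with some $v\notin M$, so when isolating the sub-instance on $M$ we must be careful that the equivalence-class merging trick of Theorem \ref{Thm:achievability} is re-run \emph{within each $M$}, and that the MDS columns are assigned per multicast group rather than globally, so that linear independence of at most $t_M$ interfering columns on each layer is preserved. Once the restriction of the cover, the equivalence relation, and the MDS-code assignment are localized to each $M$, the interference cancellation argument of Theorem \ref{Thm:achievability} ports over verbatim.
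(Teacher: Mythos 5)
Your arguments for $\psi_f^p$ and $\psi_\ell^p$ are correct and essentially identical to the paper's proof: the paper likewise time-shares the $(\lvert R(M)\rvert, d_M)$ MDS codes over the groups of an integral solution and handles the LP relaxation of (\ref{pr:partmul}) by clearing denominators and sub-packetizing, and it achieves $\psi_\ell^p$ by running the scheme of Theorem \ref{Thm:achievability} on each induced sub-instance $\mathcal{H}(M,R(M))$ with $t_M$ transmissions and time-sharing the disjoint groups; your remarks about restricting the cover to $M$ and re-running the equivalence-class merging and MDS-column assignment within each group are exactly the localization the paper intends.

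For $\psi_{f\ell}^p$, however, your final step is too quick as written. You clear denominators with a single $r$, split each packet into $r$ sub-packets, and then ``apply the $\psi_\ell^p$ scheme on each layer.'' This presumes that the fractional optimum of (\ref{prog:plochyp}) decomposes into $r$ integral solutions, one per sub-packet layer, which an LP solution does not in general provide; moreover the objective $\sum_M a_M t_M$ is a product of two fractional quantities (the group weights $a_M$ and the counts $t_M$ driven by the fractional $y_C$), so a single $r$-way split undercounts the sub-packetization needed. The paper resolves this with nested sub-packetization: take $r_1$ clearing the denominators of $\{y_C\},\{t_M\}$ and $r_2$ clearing those of $\{a_M\}$, split each packet into $r_1 r_2$ sub-packets, run the vector $\psi_{f\ell}$ scheme (which consumes $r_1$ sub-packets and costs $r_1 t_M$ transmissions) on $\mathcal{H}(M,R(M))$, and repeat it $r_2 a_M$ times for each group $M$; since every user lies in groups of total weight $r_2$, it receives all $r_1 r_2$ of its sub-packets, and dividing the total $\sum_M (r_2 a_M)(r_1 t_M)$ transmissions by $r_1 r_2$ yields the LP objective. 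Your argument is repaired by replacing the per-layer decomposition with this multiplicity/repetition count (equivalently, use $r^2$ sub-packets with your common $r$); with that change the interference-cancellation reasoning ports over from Theorem \ref{Thm:achievability} exactly as you describe.
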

    \begin{proof}
              We begin with the proof for the fractional partition multicast number, denoted by $\psi_f^p$, given by the LP relaxation of program (\ref{pr:partmul}). Let us first consider the integer version given by (\ref{pr:partmul}) before moving onto the LP relaxation. In an optimal solution, the set of users is partitioned into multicast groups, i.e. $U = \bigcup_{i=1}^k M_i$. Every user belongs to one multicast group. In a multicast group, the minimum size of the side information set is found. It is given by $  \min \limits_{u \in M} \lvert R(M) \bigcap S(M) \rvert$. Note that, once a multicast group is considered, the problem is to satisfy only the users in the multicast group and only their packets participate in the transmission. Hence, the relevant induced bipartite graph is ${\cal H} \left(M, R(M) \right)$. 
  
                 The transmission scheme is given by an $(R(M),d_M)$ MDS code. Since, every user has $R(M)-d_M$ number of distinct packets as side information, by the MDS property every user in the multicast can decode his request. The overall scheme is given by time sharing the different multicast groups, i.e. $\{M_1,M_2 \ldots M_k \}$ in the partition. 
                 
                 For the LP relaxation, every nonzero subset $M \in 2^U$ is a multicast group. The transmission scheme for each group is the same as the scalar case. The only difference is in time sharing. Since, the program in (\ref{pr:partmul}) has only integer coefficients, the real optimal solutions $a_M$ are rational. As in other proofs, let $r$ be the least common multiple of denominators of $a_M$. Let $\hat{a}_M=r a_M$. With the new variables, the first constraint in the LP relaxation of (\ref{pr:partmul}) implies that every user is in exactly $r$ multicast groups. Hence, every user packet consists of $r$ subpackets and each subpacket is transmitted using the scalar scheme corresponding to one of the $r$ multicast groups. Hence, $\psi_f^p$ is achievable because of rate normalization by $r$.
                 
                 Now, we provide an achievable scheme for $\psi_{\ell}^p$ (integer program (\ref{prog:plochyp})). Given a multicast group $M$, the variables $t_M, \{y_C\}$ constitute a scalar achievable scheme with $t_M$ transmissions identical to the one used to achieve $\psi_{\ell}$, as in Theorem \ref{Thm:achievability}, for the GIC problem (defined on ${\cal H} (M,R(M))$) induced by the multicast group. And the various disjoint multicast groups are timeshared. This provides an achievable scheme for $\psi_{\ell}^p$.
                 
                 Now, consider $\psi_{f \ell}^p$ (LP relaxation of (\ref{prog:plochyp}) ). Let the optimal solution be given by real values $\{t_M\},\{y_C\}, \{ a_M\}$. Since, the program has only integral constraints, all variables are rational.  Now, consider $r_1$ to the least common multiple of denominators of $\{y_C\},\{ t_M\}$. Now, define $\hat{y}_C=r_1 y_C$ and $\hat{t}_M =r_1 t_M$. All the new variables are integral. For a particular multicast group $M$, apply the vector coding scheme of $\psi_{f \ell}$ on the GIC problem (defined on ${\cal H} \left(M,R(M)\right)$) induced by the group $M$. This needs $r_1 t_M$ transmissions and every user in $M$ gets $r_1$ subpackets. Call this scheme ${\cal S}_M$. Now, let $r_2$ be the least common multiple of $\{a_M \}$. Now, consider $r_1 r_2$ sub packets for every user. Now use the scheme $S_M$, which transmits $r_1$ subpackets, $a_M r_2$ times. Since, every user is exactly in $r_2$ subgroups (by constraint $2$ in the LP relaxation of (\ref{prog:plochyp})), every user gets $r_1 r_2$ subpackets. The total number of subpackets transmitted is  $\sum \limits_{M} a_M r_2 (t_M r_1) $. Dividing by $r_1r_2$, we get the same broadcast rate as the objective in (\ref{prog:plochyp}).   
                               
    \end{proof}

 \section{Relationship between different parameters}                   
         In this section, we provide bounds for ratios between different parameters. In our prior work \cite{Localpap1}, we showed that $\chi_f/ \chi_{f \ell} \leq \frac{5}{4}e^2$ for all UIC problems. A parallel and independent work \cite{simonyi2013} has shown a tighter upper bound of $e$. This means that the performance of the achievable scheme due to $\chi_{f \ell}$ is at most $e$ away from the one based on $\chi_{f}$ for the UIC problem. But it was not clear how to generalize $\chi_{f \ell}$ to the GIC problem and also what relationship such a possible generalization would have with respect to $\psi_f$. In this work, we have defined the GIC counterpart for $\chi_{f \ell}$.  Now, we show that the generalizations based on local chromatic numbers to the GIC problem satisfy similar bounds in relation to the generalization of $\chi_f(\bar{G}_d)$, i.e. $\psi_f$.
         
        \begin{theorem}\label{thm:bound}
                       $\frac{\psi_f \left( {\cal H} \right) }{ \psi_\ell \left( {\cal H} \right)} \leq \frac{\psi_f \left( {\cal H} \right) }{ \psi_{f \ell} \left( {\cal H} \right)} \leq e$.
          \end{theorem}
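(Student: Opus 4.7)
The first inequality is immediate: since $\psi_{f\ell}({\cal H})$ is the LP relaxation of $\psi_{\ell}({\cal H})$, we have $\psi_{f\ell}\leq\psi_\ell$ and so $\psi_f/\psi_\ell\leq\psi_f/\psi_{f\ell}$. The substantive claim is the second inequality $\psi_f({\cal H})/\psi_{f\ell}({\cal H})\leq e$.

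The plan is to generalize the proof of the analogous UIC result $\chi_f(\bar{G}_d)/\chi_{f\ell}(\bar{G}_d)\leq e$ of \cite{simonyi2013} to the groupcast setting. Let $(t^*,\{y^*_C\})$ be an optimal LP solution to the $\psi_{f\ell}({\cal H})$ program, so that $\sum_{C\ni u}y^*_C=1$ and $\sum_{C:\,C\cap W(R(u)\cup (S(u))^c)\neq\emptyset}y^*_C\leq t^*$ for every user $u$. My goal is to exhibit a feasible fractional hyperclique cover $\{z_C\}$ with $\sum_C z_C\leq e\,t^*$; since $\psi_f$ is the minimum of $\sum_C z_C$ over all such fractional covers, this gives $\psi_f\leq e\,t^*=e\,\psi_{f\ell}$.

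The construction would be probabilistic. First, clear denominators in $\{y^*_C\}$ to obtain a multiset ${\cal F}$ of hyperclique copies in which every user is covered by exactly $r$ copies and every interference neighborhood meets at most $r\,t^*$ copies. Assign each copy an independent random priority (e.g.\ an exponential clock) and, for every user $u$, let $C(u)$ be the highest-priority copy containing $u$; then set $z_C$ equal to the probability that copy $C$ is selected as $C(u)$ by some $u$. The resulting $\{z_C\}$ is a feasible fractional cover because $C(u)\ni u$ deterministically, and $\sum_C z_C$ equals the expected number of distinct winners. The key estimate $E[\#\mathrm{winners}]\leq e\,t^*$ should be obtained from the order statistics of the random priorities combined with the local-weight bound $\sum_{C':\,C'\cap W(R(u)\cup (S(u))^c)\neq\emptyset}y^*_{C'}\leq t^*$; the constant $e$ arises, as in \cite{simonyi2013}, from the worst-case value of a standard $1-(1-1/k)^k$-style expression.

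The main obstacle will be adapting the UIC argument to GIC. In the groupcast setting multiple users can request the same packet, and the interference neighborhood $W(R(u)\cup (S(u))^c)$ is defined at the packet level, so such users are automatically grouped together. To handle this, I would introduce the equivalence classes $u\sim v$ iff $R(u)=R(v)$ and both users lie in the same hyperclique copy, exactly as in the proof of Theorem~\ref{Thm:achievability}, and carry out the probabilistic analysis over these equivalence classes rather than over individual users; equivalent users share a single coded transmission and should contribute only once to the winner count. With this bookkeeping in place, the remaining calculation mirrors the UIC case of \cite{simonyi2013}.
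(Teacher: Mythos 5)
Your first inequality is fine, but there is a genuine gap at the heart of your probabilistic construction for the second. You set $z_C$ equal to the probability that copy $C$ is the highest-priority copy containing some user, so $\sum_C z_C$ is the expected number of distinct ``winners''; this quantity is governed by the global size of the support of $\{y^*_C\}$, not by the local value $t^*$, and the local-weight constraint never enters your selection rule (each user's winner is chosen only among the copies containing it). Concretely, if the solution you start from is integral ($r=1$, every user in exactly one chosen hyperclique), then every chosen hyperclique is a winner with probability $1$ and $\sum_C z_C$ equals the total number of hypercliques in the cover. For digraphs exhibiting a large gap between the directed local chromatic number and the chromatic number --- exactly the examples that motivate local coloring --- there are covers with local value $t$ that use far more than $e\,t$ cliques, so no estimate of the form $\mathbb{E}[\#\mathrm{winners}]\leq e\,t$ can follow from feasibility of $(t,\{y_C\})$, and nothing in your sketch uses optimality of $(t^*,\{y^*_C\})$ to exclude such structure. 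The argument that actually yields the constant $e$ (the one behind \cite{Localpap1,simonyi2013}) is built differently: one selects each copy independently with probability about $1/t^*$, declares a user ``happy'' iff its own copy is selected and no other copy meeting its interference neighborhood $W\left(R(u)\cup(S(u))^c\right)$ is selected, shows that the happy users form a single weak hyperclique, and concludes from the per-user coverage probability $\frac{1}{t^*}\left(1-\frac{1}{t^*}\right)^{t^*-1}\geq \frac{1}{e\,t^*}$ via LP duality for $\psi_f$. The interference structure has to appear in the selection/happiness rule itself, not only in the accounting; your equivalence-class bookkeeping for users with identical requests would then be the right ingredient for verifying the hyperclique property in the GIC setting.

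You should also know that the paper avoids redoing any probabilistic argument: it proves the bound by a reduction to the unicast case. From ${\cal H}$ it builds a UIC instance $G_d$ on the user set (one packet per user, with $(u,v)\in E_d$ iff $u$ has $v$'s requested packet or $R(u)=R(v)$), observes that hypercliques of ${\cal H}$ coincide with cliques of $G_d$, so $\psi_f({\cal H})=\chi_f(\bar{G}_d)$, and checks that any feasible solution of the $\psi_{f\ell}$ LP remains feasible for the $\chi_{f\ell}$ LP because $\left(N(u)\right)^c\cup\{u\}\subseteq W\left(R(u)\cup(S(u))^c\right)$, giving $\psi_{f\ell}({\cal H})\geq\chi_{f\ell}(\bar{G}_d)$; the known UIC bound $\chi_f(\bar{G}_d)/\chi_{f\ell}(\bar{G}_d)\leq e$ of \cite{simonyi2013} is then invoked as a black box. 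So either carry out the happy-user argument directly on ${\cal H}$, or adopt the reduction; the winner-counting scheme as proposed cannot be repaired by bookkeeping alone.
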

          \begin{proof}
                The left inequality is obvious because $\psi_{f \ell}$ is the LP relaxation of $\psi_{\ell}$. To prove the right inequality, given any GIC problem on ${\cal H}$, we come up with a UIC problem on a side information graph $G_d$ such that $\psi_f \left({\cal H}\right) = \chi_f \left( \bar{G}_d\right)$ and $\psi_{f \ell} \left({\cal H}\right) \geq  \chi_{f \ell} \left( \bar{G}_d\right) $.  This will imply that $\frac{\psi_f \left( {\cal H} \right) }{ \psi_{f \ell} \left( {\cal H} \right)} \leq \frac{ \chi_f (\bar{G}_d)}{ \chi_{f \ell} (\bar{G}_d)}$. Since $ \frac{ \chi_f (\bar{G}_d)}{ \chi_{f \ell} (\bar{G}_d)}$ this has been shown \cite{simonyi2013} to be upper bounded by $e$  for all digraphs $G_d$, this implies the right inequality.
                
                In the GIC problem on ${\cal H}$, the number of packets is less than that of the number of users. To convert it into a UIC problem with side information graph $G_d$ on the user set $U$, we introduce a packet for every user such that user $u$ requests packet $x_u$. The user set is identical to both problems. For user $u$, if $v \in W(S(u))$ in the GIC problem, then $(u,v) \in G_d$. In other words, if user $u$ has a packet requested by user $v$ in GIC problem, packet $x_v$ is present as side information with user $u$ in the UIC version. Further, if requests of users $v$ and $u$  are identical in the GIC problem, then $(u,v), (v,u) \in E_d$. 
                
               From the above construction, it is clear that if $C$ is a hyperclique in ${\cal H}$, then $C$ is a clique in the side information graph $G_d$ and vice versa. A clique $C$ in a directed graph is defined to be the complete graph on the vertices of $C$. Any two vertices in a complete graph have edges in both directions.  $\psi_f$ is an efficient covering of all users by hypercliques. $\chi_f \left( \bar{G}_d \right) $ is also an efficient covering of all users in $G_d$ by cliques in $G_d$. Since, the user set is identical and the set of cliques is identical to the set of hypercliques, $\psi_f ({\cal H})= \chi_f \left( \bar{G}_d \right)$.
               
               Now, we show that $\chi_{f \ell} (G_d) \leq \psi_{f \ell} \left( {\cal H} \right)$.  Consider a hyperclique $C$ in ${\cal H}$. Then $C$ is a clique in $G_d$. Consider some arbitrary weights $y_C$ for all $C$ and consider a scalar $t$ that satisfies the constraints in LP relaxation of program (\ref{prog:lochyp}). Since, the user set is identical if the weights $y_C$ assigned cover every user $u$ in ${\cal H}$, the covering constraint holds for $G_d$ too. Let $N(u)$ represent the out-neighborhood of $u$ in $G_d$. Let $\left( N(u) \right)^c = U-N(u)-{u}$. 
               
               The equivalent of constraint $1$ of the LP relaxation of program (\ref{prog:lochyp}) for $\chi_{f \ell} \left( \bar{G}_d \right)$ is: $\sum \limits_{C \bigcap \left( \left(N(u) \right)^c \bigcup u \right)} y_C \leq t$.   
               
               It is enough to show that this equivalent constraint is satisfied for $t$. Observe that $ \left( N(u) \right)^c  \subseteq W\left( \left( S(u) \right)^c \right)  $ and $u \in W(R(u))$. Therefore, $ C \bigcap \left( \left(N(u) \right)^c \bigcup u \right) \subseteq  C \bigcap W \left( R(u) \right) \bigcup W \left( \left( S(u) \right) ^c \right) $. Therefore, the equivalent constraint holds.
               
               From the above two results, we obtain $\frac{\psi_f \left( {\cal H} \right) }{ \psi_{f \ell} \left( {\cal H} \right)} \leq \frac{\chi_f \left( \bar{G}_d \right) } { \chi_{\ell} \left( \bar{G}_d \right)} $.
   The ratio on the right has been shown to be upper bounded by $e$ as noted before. Hence, the result in the theorem follows. 
                       
          \end{proof}
                                              
       Now, we show that $\psi_{f \ell}^{p}$ is better than all the graph theoretic schemes.
  
         \begin{theorem}
                 The achievable scheme based on $\psi_{f \ell}^p$ is better than all known previous achievable schemes based on the concepts of hyperclique covers, local graph coloring and partitioning. Formally, $\psi_{f \ell}^p \leq \psi_{f \ell} \leq \psi_f$ and $\psi_{f \ell}^p \leq \psi_f^p$.
         \end{theorem}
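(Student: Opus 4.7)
\textit{Proof plan.} The statement bundles three inequalities, and I would attack them in increasing order of subtlety. The inequality $\psi_{f\ell}\le \psi_f$ is almost definitional: given any optimum $\{y_C\}$ for the LP of $\psi_f$, set $t = \sum_C y_C$; since every interference-set sum in the first constraint of (\ref{prog:lochyp}) is bounded by the total weight $\sum_C y_C$, the pair $(t,\{y_C\})$ is feasible for the LP of $\psi_{f\ell}$ with objective equal to $\psi_f$. The inequality $\psi_{f\ell}^p \le \psi_{f\ell}$ follows from the trivial partition into a single multicast group $M=U$: set $a_U=1$ and $a_M=0$ otherwise, take $\{y_C\}$ and $t$ from the optimum of $\psi_{f\ell}$, and set $t_U = t$. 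Because $C\cap U = C$ for every hyperclique, the group-$U$ constraint in (\ref{prog:plochyp}) reduces to the $\psi_{f\ell}$ constraint and the objective becomes $a_U t_U = t = \psi_{f\ell}$.

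The interesting inequality is $\psi_{f\ell}^p \le \psi_f^p$, which is where a single global hyperclique cover must be matched against the multicast partition. The plan is to start from an optimal LP solution $\{a_M^\ast\}$ of $\psi_f^p$ and to build a compatible $\{y_C\}$ from scratch. The cover I would use is the ``same-request'' cover: for each packet $p\in P$, take $C_p = W(p)$ (the set of all users requesting $p$) and set $y_{C_p}=1$; every other $y_C=0$. Each $W(p)$ is a valid hyperclique because any two users $u,v\in W(p)$ satisfy $R(u)=R(v)$, and since each user requests exactly one packet the covering constraint $\sum_{C:u\in C}y_C = 1$ holds exactly. With these $y_C$, for any $M\in 2^U$ and $u\in M$, the constraint LHS in (\ref{prog:plochyp}) counts the number of packets $p$ with $W(p)\cap M\neq\emptyset$ and $p\in R(u)\cup(S(u))^c = P-S(u)$. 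The first condition selects packets in $R(M)$ and the second excludes $S(u)$, yielding $|R(M)| - |R(M)\cap S(u)|$. Taking $t_M$ to be the maximum of this quantity over $u\in M$ gives $t_M = |R(M)| - \min_{u\in M}|R(M)\cap S(u)| = d_M$. Using the same $\{a_M^\ast\}$ from $\psi_f^p$ together with this $\{y_C\}$ and $\{t_M\}$ produces a feasible solution to the LP of $\psi_{f\ell}^p$ with objective $\sum_M a_M^\ast t_M \le \sum_M a_M^\ast d_M = \psi_f^p$.

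The main obstacle is conceptual rather than technical: program (\ref{prog:plochyp}) forces a single global hyperclique cover $\{y_C\}$ to serve every multicast group $M$ simultaneously, so one cannot simply pull a different optimal cover into each sub-instance $\mathcal{H}(M,R(M))$. The crux is therefore identifying a universal cover whose restriction to every $M$ already certifies $t_M \le d_M$; the ``one hyperclique per packet'' choice works because it makes the local count reduce to exactly the quantity appearing in $d_M$. For $M$ with $a_M^\ast=0$ the value of $t_M$ is irrelevant to the objective, and one can simply set it to the maximum over $u\in M$ of the LHS, so feasibility is preserved without additional cost.
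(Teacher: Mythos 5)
Your proposal is correct and follows essentially the same route as the paper: the trivial single-group partition $a_U=1$ handles $\psi_{f\ell}^p\le\psi_{f\ell}\le\psi_f$, and for $\psi_{f\ell}^p\le\psi_f^p$ the paper uses exactly your ``one hyperclique per packet'' cover $C_p=W(p)$ with $t_M=d_M$, reusing the optimal $\{a_M\}$ from the partition multicast LP. Your write-up merely spells out a bit more explicitly (e.g., the feasibility of $t=\sum_C y_C$ for $\psi_{f\ell}\le\psi_f$ and the counting argument showing the constraint LHS equals $\lvert R(M)\rvert-\lvert R(M)\cap S(u)\rvert$) what the paper compresses.
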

        \begin{proof}
                Setting $a_U=1$ and $a_M=0, ~\forall M: \lvert M \rvert < \lvert U \rvert $ in the LP relaxation of program (\ref{prog:plochyp}) and optimizing for the rest of the variables one gets $\psi_{f \ell}$ which is less than or equal to $\psi_f$ by definition.  Hence, the first chain of inequalities is proved. 
                
                 Now, we show that the fractional partition multicast number $\psi_f^p \geq \psi_{f \ell}^p $. To see this, consider the optimal solution of the relaxation of program (\ref{pr:partmul}) given by $a_M$. Now, it is enough to show that there exists a set of feasible variables $y_C$ and $t_M=d_M$ such that the constraints of the LP relaxation of program (\ref{prog:plochyp}) are satisfied. If $C= W (p)$ for some packet $p \in P$, then $y_C=1$, otherwise $y_C=0$. In other words, assign a weight of $1$ to those hypercliques that comprise the set of users requesting the same packet $p$ for some $p \in P$ and all other hypercliques are assigned a weight $0$.  Since, every user is contained in a unique hyperclique characterized by the packet the user requests, constraint $3$ in the LP relaxation of program (\ref{prog:plochyp}) is satisfied. Constraint $2$ is satisfied because the set of variables $a_M$ form a feasible solution to the LP relaxation of (\ref{pr:partmul}).  We need to show that the first constraint is satisfied with $t_M=d_M$ for all $M$ and $u \in M$. Consider a particular $M$. By the assignment of variables $y_C$, we have the following chain of inequalities:             
                 \begin{equation}
                      \sum \limits_{ C: W\left( R(u) \bigcup (S(u))^c \right) \bigcap C \bigcap M \neq \emptyset} y_C = R(M) - \lvert R(M)- S(u) \rvert \leq d_M 
                 \end{equation} 
     This is because, the assignment of values imply that a hyperclique with non-zero weight  is 'synonymous' with the packet and the number of hypercliques in $W\left( R(u) \bigcup (S(u))^c \right) \bigcap  M$ is exactly  the number of packets $R(M)-S(u)$. This completes the proof.           
        \end{proof}
        
       Now, we state the final result of the paper. This implies that rates of all the achievable schemes discussed in this work are at most a factor $e$ far away from $\psi_f \left( {\cal H} \right)$.
          \begin{theorem}
                 $ \psi_f \leq e \psi_{f \ell}^p $.
         \end{theorem}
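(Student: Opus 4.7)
The plan is to reduce to Theorem \ref{thm:bound} applied separately to each multicast subproblem, and then glue the resulting fractional hyperclique covers together using the partition weights. Specifically, fix an optimal solution $(\{a_M\}, \{y_C\}, \{t_M\})$ to the LP relaxation of program (\ref{prog:plochyp}), so that $\psi_{f\ell}^p({\cal H}) = \sum_M a_M t_M$.

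First I would show that for every multicast group $M$ with $a_M > 0$,
\[
\psi_{f\ell}\bigl({\cal H}(M, R(M))\bigr) \leq t_M.
\]
To see this, project the global cover $\{y_C\}$ onto ${\cal H}(M, R(M))$: for each hyperclique $C'$ of the induced instance, set $\tilde y_{C'} = \sum_{C:\, C \cap M = C'} y_C$. The covering constraint $\sum_{C:u \in C} y_C = 1$ for $u \in M$ immediately yields $\sum_{C' : u \in C'} \tilde y_{C'} = 1$. For the local constraint, note that the interference neighborhood of $u$ in ${\cal H}(M,R(M))$ consists of those $v \in M$ with $R(v) \notin S(u)$, which is exactly the condition $v \in W(R(u) \cup S(u)^c) \cap M$ in the ambient graph. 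So the first constraint of (\ref{prog:plochyp}) transports verbatim into the local hyperclique cover LP on the induced instance, certifying rate $t_M$.

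Next I apply Theorem \ref{thm:bound} to each induced GIC instance to conclude
\[
\psi_f\bigl({\cal H}(M, R(M))\bigr) \;\leq\; e \cdot \psi_{f\ell}\bigl({\cal H}(M, R(M))\bigr) \;\leq\; e \, t_M.
\]
Let $\{z_{C'}^M\}$ be an optimal fractional hyperclique cover of ${\cal H}(M,R(M))$, extended to all hypercliques $C$ of ${\cal H}$ by zero. Define the combined assignment $z_C = \sum_M a_M \, z_C^M$. For each user $u \in U$,
\[
\sum_{C:\, u \in C} z_C \;=\; \sum_{M:\, u \in M} a_M \sum_{C':\, u \in C'} z_{C'}^M \;=\; \sum_{M:\, u \in M} a_M \;=\; 1,
\]
using the partition constraint from (\ref{prog:plochyp}). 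Hence $\{z_C\}$ is a feasible fractional hyperclique cover of ${\cal H}$, and its cost is $\sum_M a_M \, \psi_f({\cal H}(M,R(M))) \leq e \sum_M a_M t_M = e \, \psi_{f\ell}^p({\cal H})$. Thus $\psi_f({\cal H}) \leq e \, \psi_{f\ell}^p({\cal H})$, as claimed.

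The main obstacle is the verification in step one, namely that the ambient local hyperclique constraints really do restrict to valid local hyperclique constraints on ${\cal H}(M,R(M))$ with the same slack $t_M$. The identities $S_M(u)^c = R(M) - S(u) - R(u)$ and $W_M(R(u) \cup S_M(u)^c) = W(R(u) \cup S(u)^c) \cap M$ make this clean, but they must be checked carefully because $S(u)^c$ is defined relative to the global packet set $P$, whereas the induced instance uses $R(M)$. Once this restriction is justified, the remainder is routine: Theorem \ref{thm:bound} handles each piece and the linear combination via $\{a_M\}$ stitches them together without loss.
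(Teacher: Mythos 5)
Your proposal is correct and follows essentially the same route as the paper: you project the global cover $\{y_C\}$ onto each multicast group to certify $\psi_{f\ell}({\cal H}(M,R(M))) \leq t_M$ (the paper's $\hat{y}_{C_M}$ construction), apply Theorem \ref{thm:bound} groupwise, and recombine using the partition weights $a_M$. The only difference is that you spell out explicitly the gluing step which the paper justifies with the one-line remark that partitioning and summing $\psi_f$ over groups cannot beat $\psi_f$ of the whole instance.
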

          \begin{proof}
                   Let the set of variables $\{t_M,y_C,a_M\}$ be the optimal solution to the LP relaxation of program (\ref{prog:plochyp}). Let ${\cal C}_M$ be the set of hypercliques for the induced GIC problem on ${\cal H} \left(M,R(M) \right)$. Let $\hat{y}_{C_M} =\sum \limits_{C: C \bigcap M = C_M} y_C, ~ \forall C_M \in {\cal C}_M$. Observe that, since variables $\{t_M\}$ and $\{y_C\}$ satisfy the first constraint of the LP relaxation of program (\ref{prog:plochyp}), the variables $\hat{y}_C$ and the variable $\{ t_M \}$ satisfy the constraints of the LP relaxation of program (\ref{prog:lochyp}) on the induced GIC problem given by ${\cal H} \left(M, R(M) \right)$. Therefore, $t_M \geq \psi_{f \ell} \left( {\cal H} \left( M, R(M) \right)  \right)$. Now, we have the following chain of inequalities:
                  \begin{align}      
           \frac{\psi_f \left( {\cal H} \right) }{ \psi_{f \ell}^p \left( {\cal H} \right)} & \leq \frac{\psi_f \left({\cal H} \right)}{\sum \limits_{M \in 2^U-\{0\} } a_M t_M} \nonumber\\
              \hfill                                                                                               & \overset{a}{\leq}  \frac{\psi_f \left({\cal H} \right)}{\sum \limits_{M \in 2^U-\{0\} } a_M \psi_{f \ell} \left( {\cal H} \left( M, R(M) \right)  \right)} \nonumber \\
              \hfill                                                                                               & \overset{b}{\leq} \frac{ \sum \limits_{M \in 2^U-\{0\} } a_M \psi_{f} \left( {\cal H} \left( M, R(M) \right)  \right)} {\sum \limits_{M \in 2^U-\{0\} } a_M \psi_{f \ell} \left( {\cal H} \left( M, R(M) \right)  \right)}    \nonumber \\      
              \hfill                                                                                               & \overset{c}{\leq} \frac{ \sum \limits_{M \in 2^U-\{0\} } a_M \psi_{f} \left( {\cal H} \left( M, R(M) \right)  \right)} {\sum \limits_{M \in 2^U-\{0\} } a_M \frac{1}{e} \psi_{f } \left( {\cal H} \left( M, R(M) \right)  \right)} \nonumber \\
              \hfill                                                                                               & \leq e               
                  \end{align}  
      Justifications for the above chain are: a) $t_M \geq \psi_{f \ell} \left( {\cal H} \left( M, R(M) \right)  \right) $ b) Partitioning the set of users and adding up $\psi_f$ over all partitions can not increase $\psi_f$ for the original GIC problem. c) Theorem \ref{thm:bound}.            
           \end{proof}
      \section{Conclusion}    
             In this work, we generalized the concept of local and fractional local chromatic numbers and their achievable schemes to the groupcast setting. Further, we defined a new graph parameter and an achievable scheme combining local coloring concepts and the idea of partitioning. This scheme is better than all known purely graph theoretic schemes and generalizes clique covers, cycle covers, partition multicast and local coloring. We show that this scheme is multiplicatively at most $e$ far away from the scheme based on fractional chromatic number (or fractional hyperclique cover number).
\pagenumbering{arabic}
\bibliographystyle{IEEEtran}
\bibliography{Local2bib}

\end{document}